\definecolor{lgray}{gray}{0.92}
\definecolor{lblue}{rgb}{0.90,0.90,1.00}
\definecolor{lyellow}{rgb}{1.00,1.00,0.70}
\newenvironment{codex}{\small\verbatim}{\endverbatim\normalsize}
\newtheorem{prop}{Proposition}
\newcommand{\BI}[0]{\begin{itemize}}
\newcommand{\EI}[0]{\end{itemize}}
\newcommand{\BE}[0]{\begin{enumerate}}
\newcommand{\EE}[0]{\end{enumerate}}
\newcommand{\BX}[0]{\begin{codex}}
\newcommand{\EX}[0]{\end{codex}}
\def \bscale1 {0.25}
\def \bscale {0.25}
\def \N {\mathbb{N}}
\newcommand{\FIG}[4]{
\begin{figure}[htbp]
\centering
{\includegraphics[scale=#3]{figs/#4}}
\caption{#2}
\label{#1}
\end{figure}
}
\title{
   Tree-based Arithmetic and Compressed Representations of Giant Numbers
}
\author{Paul Tarau}
\institute{
   Department of Computer Science and Engineering\\
   University of North Texas\\
   {\em e-mail: tarau@cs.unt.edu}
}
\begin{document}
\maketitle
\date{}

\begin{abstract}
Can we do arithmetic in a completely different way, 
with a radically different data structure?  
Could this approach provide practical benefits,
like operations on giant numbers while 
having an average performance similar to traditional 
bitstring representations?

While  answering these questions positively, 
our tree based representation described in this paper
comes with a few extra benefits:
it compresses giant numbers such that, for instance,
the largest known prime number as well
as its related perfect number are represented as trees of small sizes.
The same also applies to Fermat numbers and important computations like
exponentiation of two become constant time operations. 

At the same time,
succinct representations of sparse sets, multisets and sequences become
possible through bijections to our tree-represented natural numbers.
\end{abstract}

\section{Introduction}
If extraterrestrials were to do arithmetic computations, would one assume that
their numbering system should look the same?  At a first thought, 
one might be inclined to think so, maybe with the exception of the actual 
symbols used to denote digits or the base of the system likely to match numbers of fingers or some other anthropocentric criteria. 
After some thinking about the endless diversity of the universe (or unconventional models of Peano's axioms), 
one might also consider the possibility that
the departure from our well-known number representations could be more significant.

With more terrestrial uses in mind, one would simply ask: is it possible to do arithmetic  differently, possibly including giant numbers and a radically different underlying data structure, while maintaining the same fundamentals -- addition, multiplication, exponentiation, primality, Peano's axioms behaving in the same way?
Moreover, can such exotic arithmetic be as efficient as binary arithmetic, 
while possibly supporting massively parallel execution?

We have shown in \cite{ppdp10tarau} that type classes
and polymorphism can be used to
share fundamental operations between natural numbers, 
finite sequences, sets and multisets.
As a consequence of this line of research, we have
discovered that it is possible to transport the recursion equations
describing binary number arithmetics on natural numbers
to various tree types.

However, the representation proposed in this paper is {\em different}. It is arguably 
simpler, more flexible and likely to support parallel execution of
operations.  It also allows to easily compute average and worse
case complexity bounds. We will describe it in full detail in the next
sections, but for the curious reader, it is essentially a {\em recursively self-similar 
run-length encoding of bijective base-2 digits}.

The paper is organized as follows. Section \ref{sharing} describes 
our type class used to share generic properties of natural numbers.
Section \ref{gen} describes binary arithmetic operations that
are specialized in section \ref{T} to our compressed tree representation.
Section \ref{theo} describes efficient tree-representations 
of some important number-theoretical entities like Mersenne, Fermat and
perfect numbers. 
Section \ref{sparse} shows that sparse sets, multisets and lists
have succinct tree representation.
Section \ref{iso} describes
generic isomorphisms between data types, centered
around transformations of instances of
our type class to their corresponding sets, multisets and lists.
Section \ref{red} shows interesting complexity reductions
in other computations and
section \ref{perfcomp} compares the performance of our tree-representation
with conventional ones.
Section \ref{related} discusses related work.
Section \ref{concl} concludes the paper and discusses future work.

To provide a concise view of our compressed tree data type and its operations, we will
use the strongly typed functional language Haskell as a precise means to
provide an {\em executable} specification. We have adopted
a literate programming style, i.e. the
code contained in the paper
forms a self-contained Haskell module (tested with ghc 7.4.1),
also available as a separate file at
\url{http://logic.cse.unt.edu/tarau/research/2013/giant.hs} .
Also, a Scala package implementing the same tree-based computation
is available from \url{http://code.google.com/p/giant-numbers/}.
We hope that this will encourage the reader to
experiment interactively and validate the
technical correctness of our claims.

We mention, for the benefit of the
reader unfamiliar with Haskell, that a notation like {\tt f x y} stands for $f(x,y)$,
{\tt [t]} represents sequences of type {\tt t} and a type declaration
like {\tt f :: s -> t -> u} stands for a function $f: s \times t \to u$
(modulo Haskell's ``currying'' operation, given the isomorphism between 
the function spaces ${s \times t} \to u$ and ${s \to t} \to u$). 
Our Haskell functions are always represented as sets
of recursive equations guided by pattern matching, conditional
to constraints (simple arithmetic relations following \verb~|~ and before
the \verb~=~ symbol).
Locally scoped helper functions are defined in Haskell
after the {\tt where} keyword, using the same equational style.
The composition of functions {\tt f} and {\tt g} is denoted {\tt f . g}.
It is also customary in Haskell, when defining functions in an equational style (using {\tt =})
to write $f=g$ instead of $f~x=g~x$ (``point-free'' notation).
The use of Haskell's ``call-by-need'' evaluation
allows us to work with infinite
sequences, like the {\tt [0..]} infinite list notation, corresponding to the
set $\N$ itself. 

Our literate Haskell program is organized as the module {\tt Giant}
depending on a few packages, as follows:
\begin{code}
{-# LANGUAGE NoMonomorphismRestriction #-}
-- needed to define toplevel isomorphisms
module Giant where
-- cabal install data-ordlist is required before importing this
import Data.List.Ordered
import Data.List hiding (unionBy)
import System.CPUTime
\end{code}

\section{Sharing axiomatizations with type classes} \label{sharing}

Haskell's {\em type classes} \cite{DBLP:conf/popl/WadlerB89} are 
a good approximation of axiom
systems as they describe properties and operations 
generically i.e. in
terms of their action on objects of a parametric type. 
Haskell's type {\em instances}
approximate {\em interpretations} \cite{kaye07} of such axiomatizations by
providing implementations of the primitive operations, with the added benefit of
refining and possibly
overriding derived operations with more efficient equivalents.

We start by defining a type class that abstracts away 
properties of binary representations of natural numbers.

The class {\tt N} assumes only a theory of structural equality (as
implemented by the class {\tt Eq} in Haskell).
It implements a representation-independent
abstraction of natural numbers, allowing us to compare our
tree representation with ``ordinary'' natural numbers represented
as non-negative arbitrary large {\tt Integers} in Haskell,
as well as with a binary representation using bijective base-2
\cite{wiki:bijbase}.

\begin{code}
class Eq n => N n where
\end{code}
An instance of this class is required to implement the following 6
primitive operations:
\begin{code}
  e :: n
  o,o',i,i' :: n->n
  o_ :: n->Bool
\end{code}
The constant function {\tt e} can be seen as representing the empty sequence of 
binary digits. 
With the usual representation of natural numbers, {\tt e} will be interpreted as
{\tt 0}.  The constructors {\tt o} and {\tt i} can be seen as
applying a function that adds a {\tt 0} or {\tt 1} digit to a binary
string on {\tt \{0,1\}$^*$}. The deconstructors {\tt o'} and {\tt i'} undo
these operations by removing the corresponding digit. The recognizer
{\tt o\_} detects that the constructor {\tt o} is the last one applied,
i.e. that the``string ends with the {\tt 0} symbol. It will be interpreted
on $\mathbb{N}$ as a recognizer of odd numbers.
 
This type class also endows its instances with generic implementations of
the following derived operations:
\begin{code}
  e_,i_ :: n->Bool
  e_ x = x == e
  i_ x = not (e_ x || o_ x)
\end{code}
with structural equality used implicitly 
in the definition of the recognizer predicate for
empty sequences {\tt e\_}
and with the assumption that the domain is exhausted by the three recognizers
in the definition of the recognizer {\tt i\_} of
sequences ending with {\tt 1}, representing {\rm even} positive numbers
in bijective base 2.

Successor {\tt s} and predecessor {\tt s'} functions are implemented in terms
of these operations as follows:
\begin{code}  
  s,s' :: n->n
  
  s x | e_ x = o x
  s x | o_ x = i (o' x)
  s x | i_ x = o (s (i' x))
  
  s' x | x == o e = e
  s' x | i_ x = o (i' x)
  s' x | o_ x = i (s' (o' x))
\end{code}
By looking at the code, one might notice that our generic definitions
of operations mimic recognizers, constructors and destructors for bijective
base-2 numbers, i.e. sequences in the language $\{0,1\}^{*}$, similar to binary
numbers, except that 0 is represented as the empty sequence and left-delimiting by 1 is
omitted.

\begin{prop}
Assuming average constant time for recognizers, constructors and destructors {\tt e\_,o\_o,i\_,i,o',i'}, successor
and predecessor {\tt s} and {\tt s'} are constant time, on the average.
\end{prop}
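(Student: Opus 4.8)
The plan is to analyze {\tt s} and {\tt s'} exactly as one analyzes the textbook binary counter. The first observation is structural: each recursion equation branches on exactly one of the three mutually exclusive recognizers {\tt e\_}, {\tt o\_}, {\tt i\_}, and the only equation that recurses --- {\tt s x = o (s (i' x))} when {\tt i\_ x} holds, and {\tt s' x = i (s' (o' x))} when {\tt o\_ x} holds --- peels off one digit with {\tt i'} (resp.\ {\tt o'}) before recurring. Hence the recursion unfolded by {\tt s x} has depth exactly $1+r$, where $r$ is the length of the maximal run of {\tt i}-digits at the least-significant (outermost) end of {\tt x}, and symmetrically {\tt s' x} has depth $1$ plus the length of the outermost run of {\tt o}-digits; this is just the carry-propagation length. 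Each level performs only a bounded number of the primitives {\tt e\_,o\_,i\_,i,o',i'}, so, measured in units of their assumed average-constant cost, {\tt s x} runs in time $\Theta(r+1)$, and likewise {\tt s' x}.

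Next I would make ``constant on the average'' precise in the amortized sense, bounding the total cost of the $m$ successor steps that count from $0$ to $m$. Take the potential $\Phi(x)$ to be the total number of {\tt i}-digits in {\tt x}. Applying {\tt s} to an argument whose outermost {\tt i}-run has length $r$ rewrites those $r$ copies of {\tt i} into {\tt o}'s and then either flips one further {\tt o} into an {\tt i} or appends one fresh {\tt o}; thus $\Delta\Phi\in\{-r+1,-r\}$ while the true cost is $\Theta(r+1)$, so the amortized cost per step is $\Theta(1)$. Since $\Phi$ is nonnegative and vanishes on {\tt e}, telescoping leaves total cost $O(m)$, i.e.\ $O(1)$ per successor on the average. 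For {\tt s'} the identical argument runs with the potential set to the number of {\tt o}-digits, the starting value $O(\log m)$ being absorbed into $O(m)$; alternatively, one may simply observe that {\tt s} and {\tt s'} are mutual inverses, so decrementing from $m$ down to $0$ revisits the same representations and costs the same $O(m)$ in total.

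A second, arguably cleaner, reading of ``on the average'' is as an expectation over inputs of a fixed size. There are $2^m$ length-$m$ bijective base-2 strings, of which exactly $2^{m-r}$ end in a run of at least $r$ copies of {\tt i}; hence the expected length of the outermost {\tt i}-run is $\sum_{r\ge 1}2^{-r}<1$, so the expected running time of {\tt s} on a uniformly random $m$-digit input is $\Theta(1)$, with the same bound for {\tt s'} via runs of {\tt o}'s.

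I expect the delicate points to be bookkeeping rather than ideas. One must read the carry rule off the recursion equations and match it against bijective base-2 arithmetic, which is slightly less uniform than plain binary; one must pick the right potential --- the \emph{total} {\tt i}-count rather than merely the length of the current trailing run, since the latter alone does not give a per-step $O(1)$ amortized bound; and one must keep the two independent layers of averaging --- average-constant primitives and average-constant carry length --- from being conflated, since the stated bound is their product. Apart from that, it is the classical binary-counter argument.
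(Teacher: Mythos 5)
Your proposal is correct, and its second reading --- the recursion depth of {\tt s} equals the length of the trailing run of {\tt i}-digits, whose expectation over inputs is bounded by the geometric series $\sum_{r\ge 1}2^{-r}$ --- is precisely the paper's proof (the paper phrases it as ``half of them end with a 0 and another half of those end with another 0, etc., so the average is bounded by ${1\over2}+{1\over4}+\cdots=1$''). What you add beyond the paper is the amortized reading: the potential-function argument (with $\Phi$ = total {\tt i}-count for {\tt s}, total {\tt o}-count for {\tt s'}) gives the stronger statement that $m$ consecutive successor steps starting from $0$ cost $O(m)$ in total, i.e.\ the bound holds along any counting trajectory rather than only in expectation over a uniform input; the paper is silent on which notion of ``average'' it intends, and your amortized version is the one actually needed to justify, e.g., the cost of {\tt allFrom} or the claim that {\tt exp2'} runs ``in time proportional to {\tt s} and {\tt s'}''. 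Your closing caveat is also well taken: the proposition's hypothesis makes the primitives themselves only average-constant (they call {\tt s}/{\tt s'} on the counters in type {\tt T}), so the two layers of averaging multiply, and a fully rigorous account would have to verify that the product of the two expectations is still $O(1)$ --- a point the paper glosses over as well.
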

\begin{proof}
Clearly the first two  rules are constant time for both {\tt s} and {\tt s'} as they do not make recursive calls. 
To show that the third rule applies recursion a
constant number of times on the average, we observe that the recursion steps are exactly given by the number of 0s or 1s
that a (bijective base-2 number) ends with. As only half of them end with a 0 and another half of those end with another 0
etc. one can see that the average number of 0s is bounded by ${1 \over 2}+{1 \over 4}+ \ldots=1$. The same reasoning
applies to the average number of 1s a number can end with.
\end{proof}
The infinite stream of generic natural numbers is generated 
by iterating over the successor operation {\tt s}:
\begin{code}
  allFrom :: n->[n]
  allFrom x = iterate s x
\end{code}

\section{Efficient arithmetic operations, generically} \label{gen}

We will first show that all fundamental arithmetic operations
can be described in this abstract, representation-independent framework. This
will make possible creating instances that, on top of
symbolic tree representations, provide implementations of these
operations with asymptotic efficiency comparable to the usual
bitstring operations.

We start with addition ({\tt add}) and subtraction ({\tt sub}):
\begin{code} 
  add :: n->n->n
  add x y | e_ x = y
  add x y | e_ y  = x
  add x y | o_ x && o_ y = i (add (o' x) (o' y))
  add x y | o_ x && i_ y = o (s (add (o' x) (i' y)))
  add x y | i_ x && o_ y = o (s (add (i' x) (o' y)))
  add x y | i_ x && i_ y = i (s (add (i' x) (i' y)))
  
  sub :: n->n->n
  sub x y | e_ y = x
  sub y x | o_ y && o_ x = s' (o (sub (o' y) (o' x))) 
  sub y x | o_ y && i_ x = s' (s' (o (sub (o' y) (i' x))))
  sub y x | i_ y && o_ x = o (sub (i' y) (o' x))  
  sub y x | i_ y && i_ x = s' (o (sub (i' y) (i' x))) 
\end{code}
It is easy to see that addition  and subtraction  are
implemented generically, with asymptotic complexity proportional 
to the size of the operands.
Comparison provides the expected total order of $\N$ on our type class:
\begin{code}
  cmp :: n->n->Ordering
  cmp x y | e_ x && e_ y = EQ
  cmp x _ | e_ x = LT
  cmp _ y | e_ y = GT
  cmp x y | o_ x && o_ y = cmp (o' x) (o' y)
  cmp x y | i_ x && i_ y = cmp (i' x) (i' y)
  cmp x y | o_ x && i_ y = down (cmp (o' x) (i' y)) where
    down EQ = LT
    down r = r
  cmp x y | i_ x && o_ y = up (cmp (i' x) (o' y)) where
    up EQ = GT
    up r = r
\end{code}
And based on it one can define the minimum {\tt min2} and maximum 
{\tt max2} functions as follows:
\begin{code}
  min2,max2 :: n->n->n
  min2 x y = if LT==cmp x y then x else y
  max2 x y = if LT==cmp x y then y else x
\end{code}
Next, we define multiplication:
\begin{code}
  mul :: n->n->n
  mul x _ | e_ x = e
  mul _ y | e_ y = e
  mul x y = s (m (s' x) (s' y)) where
    m x y | e_ x = y
    m x y | o_ x = o (m (o' x) y)
    m x y | i_ x = s (add y  (o (m (i' x) y)))
\end{code}
as well as double of a number {\tt db} and 
half of an even number {\tt hf},
having both simple expressions:
\begin{code}  
  db,hf :: n->n 
  db = s' . o
  hf = s . i'
\end{code}
Power is defined as follows:
\begin{code}
  pow :: n->n->n
  pow _ y | e_ y = o e
  pow x y | o_ y = mul x (pow (mul x x) (o' y))
  pow x y | i_ y = mul (mul x x) (pow (mul x x) (i' y)) 
\end{code}
together with more efficient special instances,
exponent of 2 ({\tt exp2}) and multiplication by a power of 2
({\tt leftshift}):
\begin{code}
  exp2 :: n->n
  exp2 x | e_ x = o e
  exp2 x = db (exp2 (s' x)) 
  
  leftshift :: n->n->n
  leftshift x y = mul (exp2 x) y
\end{code}
Finally, division and reminder on $\N$ is a bit trickier but
can be expressed generically as well:
\begin{code}
  div_and_rem :: n->n->(n,n)
  
  div_and_rem x y | LT == cmp x y = (e,x)
  div_and_rem x y | not (e_ y) = (q,r) where 
    (qt,rm) = divstep x y
    (z,r) = div_and_rem rm y
    q = add (exp2 qt) z
    
    divstep :: N n => n->n->(n,n)
    divstep n m = (q, sub n p) where
      q = try_to_double n m e
      p = mul (exp2 q) m    
    
      try_to_double x y k = 
        if (LT==cmp x y) 
          then s' k
          else try_to_double x (db y) (s k)   
          
  divide,reminder :: n->n->n
  
  divide n m = fst (div_and_rem n m)
  reminder n m = snd (div_and_rem n m)
\end{code}

And for the reader curious by now about how this maps to ``arithmetic as
usual", here is an instance built around 
the (arbitrary length) {\tt Integer}
type, also usable as a witness on the time/space complexity 
of our operations.
\begin{code}
instance N Integer where
  e = 0
  
  o_ x = odd x
  
  o  x = 2*x+1
  o' x | odd x && x >  0 = (x-1) `div` 2
  
  i  x = 2*x+2
  i' x | even x && x > 0 = (x-2) `div` 2
\end{code}
An instance mapping our abstract operations to actual constructors,
follows in the form of the datatype {\tt B}
\begin{code}
data B = B | O B | I B deriving (Show, Read, Eq)

instance N B where
  e = B
  o = O
  i = I
  
  o' (O x) = x
  i' (I x) = x
  
  o_ (O _) = True
  o_ _ = False

\end{code}

One can try out various operations on these instances:
\begin{codex}
*Giant> mul 10 5
50
*Giant> exp2 5
32
*Giant> add (O B) (I (O B))
O (I B)
\end{codex}

\section{Computing with our compressed tree representations} \label{T}

We will now show how our shared axiomatization framework can be
implemented as a new, somewhat unusual instance, that brings
the ability to do arithmetic computations
with trees.

First, we define the data type for our tree represented natural numbers:
\begin{code}
data T = T | V T [T] | W T [T] deriving (Eq,Show,Read)
\end{code}
The intuition behind this ``union type'' is the following:
\begin{itemize}
\item The type {\tt T} corresponds to an empty sequence 
\item the type {\tt V x xs} counts the number {\tt x} of {\tt o} applications followed by an alternation
of similar counts of {\tt i} and {\tt o} applications

\item the type {\tt W x xs} counts the number {\tt x} of {\tt i} applications followed by an alternation
of similar counts of {\tt o} and {\tt i} applications

\item the same principle is applied recursively for the counters, until the empty sequence is reached
\end{itemize}
One can see this process as run-length compressed bijective base-2 numbers, represented as trees
with either empty leaves or at least one branch, after applying the encoding recursively.
First we define the 6 primitive operations:
\begin{code}
instance N T where
  e = T
  
  o T = V T []
  o (V x xs) = V (s x) xs
  o (W x xs) = V T (x:xs)

  i T = W T []
  i (V x xs) = W T (x:xs)
  i (W x xs) = W (s x) xs
  
  o' (V T []) = T
  o' (V T (x:xs)) = W x xs
  o' (V x xs) = V (s' x) xs

  i' (W T []) = T
  i' (W T (x:xs)) = V x xs
  i' (W x xs) = W (s' x) xs
  
  o_ (V _ _ ) = True
  o_ _ = False
\end{code}
Next, we override two operations involving exponents of  {\tt 2} as follows
\begin{code}
  exp2 = exp2' where
    exp2' T = V T []
    exp2' x = s (V (s' x) [])

  leftshift = leftshift' where
    leftshift' _ T = T
    leftshift' n y | o_ y = s (vmul n (s' y))
    leftshift' n y | i_ y = s (vmul (s n) (s' y))
\end{code}
The {\tt leftshift'} operation uses
an efficient implementation, specialized for the type {\tt T}, of the
repeated application (n times) of constructor {\tt o}, over the second argument
of the function {\tt vmul}:
\begin{code}
vmul T y = y
vmul n T = V (s' n) []
vmul n (V y ys) = V (add (s' n) y) ys
vmul n (W y ys) = V (s' n) (y:ys)
\end{code}

Note that such overridings take advantage of the specific encoding, as a result
of simple number theoretic observations. For instance, the operation
{\tt exp2'} works in time proportional to {\tt s} and {\tt s'}, that can be shown to
be constant on the average. The more complex {\tt leftshift'} operation observes
that repeated application of the {\tt o} operation, when adjusted 
based on being even or odd, provides an efficient implementation of multiplication
with an exponent of 2.

It is convenient at this point, as we target a diversity of interpretations
materialized as Haskell instances, to provide a polymorphic converter between
two different instances of the type class {\tt N} as well as their
associated lists, implemented by structural recursion over the representation
to convert. The function {\tt view} allows importing a wrapped
object of a different instance of {\tt N}, generically.
\begin{code}
view :: (N a,N b)=>a->b
view x | e_ x = e
view x | o_ x = o (view (o' x))
view x | i_ x = i (view (i' x))
\end{code}

We can specialize {\tt view} to provide conversions to
our three data types, each denoted with the corresponding
lower case letter, {tt t}, {\tt b} and {\tt n} for the usual natural numbers.
\begin{code}
t :: (N n) => n -> T
t = view

b :: (N n) => n -> B
b = view

n :: (N n) => n -> Integer
n = view
\end{code}
One can try them out as follows:
\begin{codex}
*Giant> t 42
W (V T []) [T,T,T]
*Giant> b it
I (I (O (I (O B))))
*Giant> n it
42
\end{codex}

While space constraints forbid us from providing the correctness
proofs of operations like {\tt exp2'} and {\tt leftshift'}, we are
able to illustrate their expected usage
as follows:
\begin{codex}
*Giant> t 5
V T [T]
*Giant> exp2 it
W T [V (V T []) []]
*Giant> n it
32
*Giant> t 10
W (V T []) [T]
*Giant> leftshift it (t 1)
W T [W T [V T []]]
*Giant> n it
1024

\end{codex}

\section{Efficient representation of some important number-theoretical entities} \label{theo}


Let's first observe that Fermat, Mersenne and perfect numbers have all compact
expressions with our tree representation of type {\tt T}.

\begin{code}
fermat n = s (exp2 (exp2 n))

mersenne p = s' (exp2 p)

perfect p = s (V q [q]) where q = s' (s' p)
\end{code}

And one can also observe that this contrasts with both the {\tt Integer} representation and
the bijective base-2 numbers {\tt B}:

\begin{codex}
*Giant> mersenne (b 127)
O (O (O (O (O (O (O (O (O (O (O (O (O (O (O 
  ... a few lines of Os and Is
  )))))))))))))))))))))))))))))))))))))))))
  *Giant> mersenne (n 127)
170141183460469231731687303715884105727
*Giant> mersenne (t 127)
V (W (V T [T]) []) []
\end{codex}

The largest known prime number, found by the GIMP distributed computing 
project \cite{wiki:gimp} is the 45-th Mersenne prime = $2^{43112609}-1$.
It is defined in Haskell as follows:
\begin{code}
-- its exponent
prime45 = 43112609 :: Integer

-- the actual Mersenne prime
mersenne45 = s' (exp2 (t p)) where 
  p = prime45::Integer

\end{code}

While it has a bitsize of 43112609, we have observed that its 
compressed tree representation using our type {\tt T} is rather small:
\begin{codex}
*Giant> mersenne45
V (W T [V (V T []) [],T,T,T,W T [],V T [],T,W T [],W T [],T,V T [],T,T]) []
\end{codex}
One the other hand, displaying it with a decimal or binary representation
would take millions of digits.

And by folding replicated subtrees to obtain an equivalent DAG representation,
one can save even more memory. Figure \ref{mersenne45} shows this representation,
involving only 6 nodes.

\FIG{mersenne45}{Largest known prime number: the 45-th Mersenne prime, represented as a DAG}{0.40}{mersenne45.pdf}

It is interesting to note that similar compact representations can also
be derived for perfect numbers. For instance, the largest known perfect number, derived
from the largest known Mersenne prime as $2^{43112609-1}(2^{43112609}-1)$,
is:
\begin{code}
perfect45 = perfect (t prime45)
\end{code}
Fig. \ref{perfect45} shows its DAG representation involving only 7 nodes.
\FIG{perfect45}{Largest known perfect number}{0.40}{perfect45.pdf}
Similarly, the largest Fermat number that has been factored so far, F11=$2^{2^{11}}+1$ is compactly represented as 
\begin{codex}
*Giant> fermat (t 11)
V T [T,V T [W T [V T []]]]
\end{codex}
By contrast, its (bijective base-2) binary representation consists of 2048 digits.

\section{Representing sparse sets, multisets and lists} \label{sparse}

We will now describe bijective mappings between collection types as
well as a G\"odel numbering scheme putting them in bijection with
natural numbers. Interestingly, natural number encodings for
sparse instances of these collections will have space-efficient
representations as natural numbers of type {\tt T}, in contrast
with bitstring or conventional {\tt Integer}-based representations.

The type class {\tt Collections} will convert between natural numbers and lists,
by using the bijection $f(x,y)=2^{x}(2y+1)$, implemented by the function {\tt c}
and its first and second projections
{\tt c'} and {\tt c''}, inverting it.
\begin{code}
class (N n) => Collections n where
  c :: n->n->n
  c',c'' :: n->n
 
  c x y = mul (exp2 x) (o y)
  
  c' x | not (e_ x) = if o_ x then e else s (c'  (hf x))
  c'' x | not (e_ x) = if o_ x then o' x else c'' (hf x)
\end{code}
The bijection between natural numbers and lists of natural numbers, {\tt to\_list}
and its inverse {\tt from\_list} apply repeatedly {\tt c}
and respectively {\tt c'} and {\tt c''}.

\begin{code}
  to_list :: n->[n]
  to_list x | e_ x = []
  to_list x = (c' x) : (to_list (c'' x))

  from_list:: [n]->n
  from_list [] = e
  from_list (x:xs) = c x (from_list xs)
\end{code}
Incremental sums are used to transform arbitrary
lists to multisets and sets, inverted by pairwise differences.
\begin{code}  
  list2mset, mset2list, list2set, set2list :: [n]->[n]
 
  list2mset ns = tail (scanl add e ns)
  mset2list ms =  zipWith sub ms (e:ms)
    
  list2set = (map s') . list2mset . (map s)
  set2list = (map s') . mset2list . (map s) 
\end{code}
By composing with natural number-to-list bijections, we obtain bijections
to multisets and sets.
\begin{code}  
  to_mset, to_set :: n->[n]
  from_mset, from_set :: [n]->n

  to_mset = list2mset . to_list
  from_mset = from_list . mset2list
  
  to_set = list2set . to_list
  from_set = from_list . set2list
\end{code}
We will add the usual instances to the type class Collections. 
A simple number-theoretic observation connecting $2^{n}$ and $n$ 
applications of the constructor {\tt i}, implemented by the 
function {\tt vmul}, allows a shortcut that speeds up the 
bijection from lists to natural numbers, by overriding the 
functions {\tt c, c', c''} in instance {\tt T}. 
\begin{code}
instance Collections B
instance Collections Integer
instance Collections T where
  c = cons where
    cons n y = s (vmul n (s' (o y)))
    
  c' = hd where  
    hd z | o_ z = T
    hd z = s x where
      V x _ = s' z
      
  c'' = tl where
    tl z | o_ z = o' z 
    tl z = f xs where
      V _ xs = s' z
 
      f [] = T
      f (y:ys) = s (i' (W y ys))
 
\end{code}
As the following example shows, trees of type {\tt T} offer a significantly more compact
representation of sparse sets.
\begin{codex}
*Giant> from_set (map t [1,100,123,234])
W (V T []) [V T [T,W T [],T],T,V T [V T [],T],T,V T [W T [],T,T]]
*Giant> from_set (map n [1,100,123,234])
27606985387162255149739023449108112443629669818608757680508075841159170
*Giant> from_set (map b [1,100,123,234])
I (I (O (O (O (O (O (O (O (O (O (O (O (O (O (O (O (O (O ...
... a few lines ...
..)))))))))))
\end{codex}
Note that a similar compression occurs for sets of natural numbers with only a few elements missing, as
they have the same representation size with type {\tt T} as the {\tt dual} of their sparse counterpart.
\begin{codex}
Giant> from_set ([1,3,5]++[6..220])
3369993333393829974333376885877453834204643052817571560137951281130
*Giant> t it
W (V T []) [T,T,T,W (W T []) [T,T,T,T]]
*Giant> dual it
V (V T []) [T,T,T,W (W T []) [T,T,T,T]]
*Giant> to_set it
[T,V T [],W T [T],W T [T,W T [],T,T]]
*Giant> map n it
[0,1,4,220]
\end{codex}

As an application, we can define bitwise operations
on our natural numbers by borrowing the
corresponding ordered set operations, provided in Haskell
by the package {\tt Data.List.Ordered}.

First we define the type class {\tt BitwiseOperations}
and the higher order function {\tt l\_op} transporting
a binary operation from ordered sets to natural numbers.
\begin{code}
class Collections n => BitwiseOperations n where 
  l_op :: ([n]->[n]->[n])->n->n->n
  l_op op x y = from_set (op (to_set x) (to_set y))
\end{code}
Next we define the bitwise {\tt and}, {\tt or} and {\tt xor} operations:
\begin{code}  
  l_and,l_or,l_xor,l_dif:: n->n->n
  l_and = l_op (isectBy cmp)
  l_or = l_op (unionBy cmp)
  l_xor = l_op (xunionBy cmp)
  l_dif = l_op (minusBy cmp)
\end{code}
More complex operations like the ternary {\tt if-the-else}
can be defined as a combination of binary operations:
\begin{code}  
  l_ite :: n->n->n->n
  l_ite x y z = from_set (ite (to_set x) (to_set y) (to_set z)) where
    ite d a b =  xunionBy cmp e b where
      c = xunionBy cmp a b
      e = isectBy cmp c d
\end{code}
Finally, bitwise negation (requiring additional parameter {\tt l}
defining the bitlength of the operand) can be defined using
set complement with respect to $2^{l}-1$, corresponding to the set of all
elements up to {\tt l}.
\begin{code}   
  l_not :: Integer->n->n
  l_not l x |xl<=l = from_set (minusBy cmp ms xs) where
    xs = to_set x
    xl = genericLength xs
    ms = genericTake l (allFrom e)
\end{code}
We will next generalize the iso-functor mechanism 
implemented generically by {\tt l\_op} that
transports operations back an forth
between data types as a special data type,
consisting of two higher order functions
inverse to each other.

\section{Isomorphisms between data types, generically} \label{iso}
Along the lines of \cite{everything} we can define
isomorphisms between data types as follows:
\begin{code}
data Iso a b = Iso (a->b) (b->a)

from (Iso f _) = f
to (Iso _ f') = f'
\end{code}
Morphing between data types as well as ``lending'' operations from
one to another is provided by the combinators {\tt as}, {\tt land1} and {\tt land2}:
\begin{code}
as that this x = to that (from this x)

lend1 op1 (Iso f f') x = f' (op1 (f x))
lend2 op2 (Iso f f') x y = f' (op2 (f x) (f y))
\end{code}
Assuming that the Haskell option {\tt NoMonomorphismRestriction} is set on, we can
now define generically ``virtual types'' centered around type class {\tt N}:
\begin{code}
nat = Iso id id
list = Iso from_list to_list
mset = Iso from_mset to_mset
set = Iso from_set to_set
\end{code}
This results in a small ``embedded language'' that morphs between
various instances of type class {\tt N} and their corresponding
list, multiset and set types, as follows:
\begin{codex}
*Giant> as set nat 1234
[1,4,6,7,10]
*Giant> map t it
[V T [],W T [T],W (V T []) [],V (W T []) [],W (V T []) [T]]
*Giant> as nat set it
W (V T []) [V T [],T,T,V T [],V T []]
*Giant> n it
1234
*Giant> lend1 s set [0,2,3]
[1,2,3]
*Giant> lend2 add set [0,2,3] [4,5]
[0,2,3,4,5]
\end{codex}

\section{Complexity reduction in other computations} \label{red}

A number of other, somewhat more common computations also benefit from our data representations.
The type class {\tt SpecialComputations} groups them together and provides their bitstring
inspired generic implementations.

The function {\tt dual} flips {\tt o} and {\tt i} operations
for a natural number seen as written in bijective base 2. 
\begin{code}
class Collections n => SpecialComputations n where
  dual :: n->n
  dual x | e_ x = e
  dual x | o_ x = i (dual (o' x))
  dual x | i_ x = o (dual (i' x))
\end{code}
The function {\tt bitsize} computes  the number of applications of the 
{\tt o} and {\tt i} operations:
\begin{code}    
  bitsize :: n->n
  bitsize x | e_ x = e
  bitsize x | o_ x = s (bitsize (o' x))
  bitsize x | i_ x = s (bitsize (i' x))
\end{code}
The function {\tt repsize} computes the representation size, which defaults to
the bit-size in bijective base 2: 
\begin{code}  
  -- representation size - defaults to bitsize
  repsize :: n->n
  repsize = bitsize
\end{code}
The functions {\tt decons} and {\tt cons} provide bijections
between $\N-\{0\}$ and $\N \times \N$ and can be used as
an alternative mechanism for building bijections
between lists, multisets and sets of natural numbers and natural numbers.
They are based on separating {\tt o} and {\tt i} applications
that build up a natural number represented in bijective base 2.
\begin{code}
  decons ::n->(n,n)
  cons :: (n,n)->n
   
  decons z | o_ z = (x,y) where
    x0 = s' (ocount z)
    y = otrim z
    x = if e_ y then (s'.o) x0 else x0 
  decons z | i_ z = (x,y) where
    x0 = s' (icount z)
    y = itrim z
    x = if e_ y then (s'.i) x0 else x0 
    
  cons (x,y) | e_ x && e_ y = s e
  cons (x,y) | o_ x && e_ y = itimes (s (i' (s x))) e
  cons (x,y) | i_ x && e_ y = otimes (s (o' (s x))) e
  cons (x,y) | o_ y = itimes (s x) y
  cons (x,y) | i_ y = otimes (s x) y
\end{code}
Implementing {\tt decons} and {\tt cons} requires
counting the number of applications of {\tt o} and {\tt i}
provided by {\tt ocount} and {\tt icount}, as well trimming
the applications of {\tt o} and {\tt i}, performed
by {\tt otrim} and {\tt itrim}.
\begin{code}    
  ocount,icount,otrim,itrim :: n->n
  
  ocount x | o_ x = s (ocount (o' x))
  ocount _ = e
  
  icount x | i_ x = s (icount (i' x))
  icount _ = e
  
  otrim x | o_ x = otrim (o' x)
  otrim x = x
  
  itrim x | i_ x = itrim (i' x)
  itrim x = x
\end{code}

\begin{code}
  otimes,itimes :: n->n->n
  
  otimes x y | e_ x = y
  otimes x y = otimes (s' x) (o y)
   
  itimes x y | e_ x = y
  itimes x y = itimes (s' x) (i y)
\end{code}

An alternative bijection between natural numbers and 
lists of natural numbers, {\tt to\_list'}
and its inverse {\tt from\_list'} is obtained by 
applying repeatedly {\tt cons}
and respectively {\tt decons}.

\begin{code}
  to_list' :: n->[n]
  to_list' x | e_ x = []
  to_list' x = hd : (to_list' tl) where (hd,tl)=decons x

  from_list' :: [n]->n
  from_list' [] = e
  from_list' (x:xs) = cons (x,from_list' xs)
\end{code}

One can observe the significant reduction of asymptotic complexity
with respect to the default operations provided by the type
class {\tt SpecialComputations}
when overriding with {\tt tbitsize} and {\tt tdual} in instance {\tt T}.
\begin{code}  
instance SpecialComputations Integer
instance SpecialComputations B
instance SpecialComputations T where
  bitsize = tbitsize where
    tbitsize T = T
    tbitsize (V x xs) = s (foldr add1 x xs)
    tbitsize (W x xs) = s (foldr add1 x xs)
    
    add1 x y = s (add x y)
    
  dual = tdual where
    tdual T = T
    tdual (V x xs) = W x xs
    tdual (W x xs) = V x xs 
    
  repsize = tsize where
    tsize T = T
    tsize (V x xs) = s (foldr add T (map tsize (x:xs)))
    tsize (W x xs) = s (foldr add T (map tsize (x:xs)))
\end{code}
The replacement with special purpose code for the {\tt cons / decons} functions
is even more significant:
\begin{code}
  decons (V x []) = ((s'.o) x,T)
  decons (V x (y:ys)) = (x,W y ys)
  decons (W x []) = ((s'.i) x,T)
  decons (W x (y:ys)) = (x,V y ys)

  cons (T,T) = V T []
  cons (x,T) | o_ x =  W (i' (s x)) []
  cons (x,T) | i_ x = V (o' (s x)) []
  cons (x,V y ys) = W x (y:ys)
  cons (x,W y ys) = V x (y:ys)
\end{code}
One can also see that their complexity is now proportional to {\tt s} and {\tt s'}
given that the {\tt V} and {\tt W} operations perform in constant time the 
work of {\tt otimes} and {\tt itimes}.
The following example illustrates their work:
\begin{codex}
*Giant> map to_list' [0..20]
[[],[0],[1],[2],[0,0],[0,1],[3],[4],[0,2],[0,0,0],[1,0],[1,1],
    [0,0,1],[0,3],[5],[6],[0,4],[0,0,2],[1,2],[1,0,0],[0,0,0,0]]
*Giant> map from_list' it
[0,1,2,3,4,5,6,7,8,9,10,11,12,13,14,15,16,17,18,19,20]\end{codex}
Note the shorter lists created close to powers of 2, coming from
the longer sequences of consecutive 
{\tt o} and {\tt i}m operations in that region.

\section{A performance comparison} \label{perfcomp}

Our performance measurements (run on a Mac Air with 8GB of memory and an Intel i7 processor) serve two objectives:
\begin{enumerate}
\item to show that, on the average, our tree based representations perform on a blend of arithmetic computations within a small constant factor compared with conventional bitstring-based computations

\item to show that on interesting special computations they outperform the conventional ones due to the much lower asymptotic complexity of such operations on data type {\tt T}.
\end{enumerate}

\begin{figure} 
\begin{center}
\begin{tabular}{||l||r|r|r|r|r||}
\hline
\hline
\multicolumn{1}{||c||}{{\small Benchmark}} & 
  \multicolumn{1}{c|}{{\small Integer}} &
  \multicolumn{1}{c|}{{\small binary type {\tt B}}} &
  \multicolumn{1}{c|}{{\small tree type {\tt T}}}\\
\hline 
\hline
 {\small Ackermann 3 7}   & 9418 & 7392 & 12313   \\ \hline
 {\small exp2 (exp2 14)}   & 23 & 315 & 0   \\ \hline
 {\small sparse set encoding} & 7560 & 2979 & 45     \\ \hline
 
 {\small bitsize of Mersenne 45} & ? & ? & 0    \\ \hline
 {\small bitsize of Perfect 45}   & ? & ? & 2   \\ \hline \hline

{\small generating primes} & 2722 & 2567 & 3591    \\ \hline
 {\small Mersenne prime tests} &  6925  & 6431 & 15037   \\ \hline \hline

\hline
\hline
\end{tabular} \\
\medskip
\caption{Time (in ms.) on a few benchmarks
\label{perf}}
\end{center}
\end{figure}

Objective {\em 1} is served by the Ackerman function that exercises the successor and predecessor functions quite heavily, the prime generation and the Lucas-Lehmer primality test for Mersenne numbers that exercise a blend of arithmetic operations.

Objective {\em 2} is served by the other benchmarks that take advantage of the overriding by instance {\tt T} of operations like {\tt exp2} and {\tt bitsize}, as well as the  compressed representation of large numbers like the 45-th Mersenne prime and perfect numbers. In some cases the conventional representations are unable to run these benchmarks within existing computer memory and CPU-power limitations (marked with {\tt ?} in the comparison table of Fig. \ref{perf}). In other cases, like in the sparse set encoding benchmark, data type {\tt T} performs significantly faster than binary representations.

Together they indicate that our tree-based representations are likely to be competitive with existing bitstring-based packages on typical computations and significantly outperform them on some number-theoretically interesting computations. While the code of the benchmarks is omitted due to space constraints, it is part of the companion Haskell file at \url{http://logic.cse.unt.edu/tarau/Research/2013/giant.hs}.

\section{Related work} \label{related}

We will briefly describe here some related work
that has inspired and facilitated this line of research
and will help to put our past contributions and planned
developments in context.

Several notations for very large numbers have been invented in the past. Examples
include Knuth's {\em arrow-up} notation \cite{knuthUp}
covering operations like the {\em tetration} (a notation for towers of exponents).
In contrast to our tree-based natural numbers,
such notations are not closed under
addition and multiplication, and consequently
they cannot be used as a replacement
for ordinary binary or decimal numbers.

The first instance of a hereditary number system, at our best knowledge,
occurs in the proof of Goodstein's theorem \cite{goodstein}, where
replacement of finite numbers on a tree's branches by the ordinal $\omega$
allows him to prove that a ``hailstone sequence'' visiting arbitrarily
large numbers eventually turns around and terminates.

Numeration systems on regular languages have been studied
recently, e.g. in \cite{Rigo2001469} and specific instances
of them are also known as 
bijective base-k numbers \cite{wiki:bijbase}.
Arithmetic packages similar to 
our bijective base-2 view of arithmetic operations
are part of libraries of proof assistants
like Coq \cite{Coq:manual} and the
corresponding regular 
languages have been used as a basis
of decidable arithmetic systems like
{\tt (W)S1S} \cite{buchi62} and {\tt (W)S2S} 
\cite{Rabin69Decidability}.

Arithmetic computations based
on recursive data types like
the free magma of binary trees
(isomorphic to the 
context-free language of balanced parentheses)
are described in \cite{sac12} and \cite{padl12},
where they are seen as G\"odel's {\tt System T} types,
as well as combinator application trees.
In \cite{ppdp10tarau}
a type class mechanism is used
to express computations on hereditarily
finite sets and hereditarily finite
functions.

An emulation of Peano and conventional binary arithmetic operations
in Prolog, is described in \cite{DBLP:conf/flops/KiselyovBFS08}.
Their approach is similar as far as a symbolic representation is used.
The key difference with our work is that our operations work on tree
structures, and as such, they are not based on previously known algorithms.

Arithmetic computations with types expressed as {\tt C++} templates
are described in \cite{DBLP:journals/corr/cs-CL-0104010} and in
online articles by Oleg Kiselyov using Haskell's
type inference mechanism. However, the algorithm described there is
basically the same as \cite{DBLP:conf/flops/KiselyovBFS08}, focusing
on Peano and binary arithmetics.

Efficient representation of sparse sets are usually based on a dedicated
data structure \cite{Briggs93anefficient} and they cannot be 
at the same time used for arithmetic computations as it is 
the case with our tree-based encoding.

In \cite{vu09} integer decision diagrams are introduced
providing a compressed representation for sparse
integers, sets and various other data types. 
However likewise \cite{sac12} 
and \cite{ppdp10tarau}, and by contrast to those
proposed in this paper, they do not compress
dense sets or numbers.

{\em Ranking} functions (bijections between data types and natural numbers)
can be traced back to G\"{o}del numberings
\cite{Goedel:31} associated to formulae. 
Together with their inverse {\em unranking} functions they are also 
used in combinatorial generation
algorithms
\cite{conf/mfcs/MartinezM03,knuth06draft}.

As a fresh look at the topic, we mention 
recent work in the context of
functional programming on
connecting heterogeneous data types through
bijective mappings and natural number 
encodings \cite{everybit,complambda,calc10tarau}.

\section{Conclusion and future work} \label{concl}

We have seen that
the average performance of
arithmetic computations with  trees of type {\tt T}
is comparable, up to small constant
factors, to computations performed
with the binary data type {\tt B}
and it outperforms them by an arbitrarily large margin on the
interesting special cases favoring the tree representations.

Still, does that
mean that such binary trees can be used as a basis
for a practical arbitrary integers package?

Native arbitrary length integer libraries like
GMP or BigInteger
take advantage of fast arithmetic on 64 bit words.

To match their performance, we plan to
switch between
bitstring representations for
numbers fitting in a machine word
and a a
tree representation for numbers not
fitting in a machine word. 

We have shown that some interesting number-theoretical entities like
Mersenne, Fermat and perfect numbers have significantly more compact
representations with our tree-based numbers. One may observe their
common feature: they are all represented in terms of exponents of 2,
successor/predecessor and specialized multiplication operations.

The fundamental theoretical challenge raised at this point is the following:
{\em can other number-theoretically interesting operations, with possible
applications to cryptography be also
expressed succinctly in terms of our tree-based data type? Is it possible to reduce the
complexity of some other important operations, besides those found so far?}

The methodology to be used relies on two key components, that
have been proven successful so far, 
in discovering succinct representations for Mersenne, Fermat and perfect numbers,
as well as low complexity algorithms for operations like {\tt bitsize} and {\tt exp2}:
\begin{itemize}
\item partial evaluation  of functional programs with respect to known data types and operations on them,
 as well as the use of other program transformations
\item salient number-theoretical observations, provable by induction, 
that relate operations on our tree data types to known identities and number-theoretical algorithms
\end{itemize}

\bibliographystyle{INCLUDES/splncs}
\bibliography{INCLUDES/theory,tarau,INCLUDES/proglang,INCLUDES/biblio,INCLUDES/syn}

\section*{Appendix}
This appendix contains some additional code, used for testing
and benchmarking our functions, grouped in the type class {\tt Benchmarks}.
First we define a prime number generator working with all our instances.
\begin{code}
class SpecialComputations n => Benchmarks n where
  primes :: [n]

  primes = s (s e) : filter is_prime (odds_from (s (s (s e)))) where
    odds_from x = x : odds_from (s (s x))
    
    is_prime p = [p]==to_primes p
    
    to_primes n = to_factors n p ps where
       (p:ps) = primes
    
    to_factors n p ps | cmp (mul p p) n == GT = [n]
    to_factors n p ps | e_ r =  p : to_factors q p ps where
       (q,r) = div_and_rem n p
    to_factors n p (hd:tl) = to_factors n hd tl
\end{code}
Next we define the Lucas-Lehmer fast primality test for Mersenne numbers:
\begin{code}
  lucas_lehmer :: n -> Bool
  lucas_lehmer p = e_ y where
    p_2 = s' (s' p)
    four = i (o e)
    m  = exp2 p
    m' = s' m
    y = f p_2 four
 
    f k n | e_ k = n
    f k n = r where 
      x = f (s' k) n
      y = s' (s' (mul x x))
      --r = reminder y m'
      r = fastmod y m 

    -- fast computation of k mod 2^p-1  
    fastmod k m | k == s' m = e
    fastmod k m | LT == cmp k m = k
    fastmod k m = fastmod (add q r) m where
       (q,r) = div_and_rem k m
  
  -- exponents leading to Mersenne primes
  mersenne_prime_exps :: [n]
  mersenne_prime_exps = filter lucas_lehmer primes
  
  -- actual Mersenne primes
  mersenne_primes :: [n]
  mersenne_primes = map f mersenne_prime_exps where
    f p = s' (exp2 p)   
\end{code}
The Ackerman function is a good benchmark for
successor and predecessor operations:
\begin{code}    
  ack :: n->n->n
  ack x n | e_ x = s n
  ack m1 x | e_ x = ack (s' m1) (s e)
  ack m1 n1 = ack (s' m1) (ack m1 (s' n1))
\end{code}
Next we define a variant of the 3x+1 problem / Collatz conjecture / Syracuse
function 
(see \url{http://en.wikipedia.org/wiki/Collatz_conjecture})
that, somewhat surprisingly, can be expressed as
a mix of arithmetic operations and reflected list operations, to test
the relative performance of some of our instances. It is easy to show
that the Collatz conjecture is true iff the function {\tt nsyr}, implementing the
n-th iterate of the {\em Syracuse function}, always
terminates:
\begin{code}
  syracuse  :: n->n
  -- n->c'' (3n+2)
  syracuse n = c'' (add n (i n))

  nsyr :: n->[n]
  nsyr n | e_ n = [e]
  nsyr n = n : nsyr (syracuse n)
\end{code}
Finally we close our type class with the usual instance declarations:
\begin{code}
instance Benchmarks Integer
instance Benchmarks B
instance Benchmarks T
\end{code}
The following example illustrates
the first 8 sequences of the Syracuse function:
\begin{codex}
*Giant> map nsyr [0..7]
[[0],[1,2,0],[2,0],[3,5,8,6,2,0],[4,3,5,8,6,2,0],
     [5,8,6,2,0],[6,2,0],[7,11,17,26,2,0]]
\end{codex}

\begin{code}
\end{code}

Our generic benchmark function measures the CPU time
for running a no argument toplevel function {\tt f} received
as a parameter.
\begin{code}
benchmark mes f = do
  x<-getCPUTime
  print f
  y<-getCPUTime
  let time=(y-x) `div` 1000000000
  return (mes++" :time="++(show time))
\end{code}
The following benchmarks provide the code used
in the section \ref{perfcomp}.
\begin{code}
bm1t = benchmark "ack 3 7 on t" (ack (t (toInteger 3)) (t (toInteger 7)))
bm1b = benchmark "ack 3 7 on b" (ack (b (toInteger 3)) (b (toInteger 7)))
bm1n = benchmark "ack 3 7 on n" (ack (n (toInteger 3)) (n (toInteger 7)))

bm2t = benchmark "exp2 t" (exp2 (exp2 (t (toInteger 14))))
bm2b = benchmark "exp2 b" (exp2 (exp2 (b (toInteger 14))))
bm2n = benchmark "exp2 n" (exp2 (exp2 (n (toInteger 14))))

bm3 tvar = benchmark "sparse_set on a type" (n (bitsize (from_set ps)))
  where ps = map tvar [101,2002..100000]

bm4t =benchmark "bitsize of Mersenne 45"  (n (bitsize mersenne45))
bm5t = benchmark "bitsize of Perfect 45" (n (bitsize perfect45))

bm6t = benchmark "large leftshift" (leftshift n n) where
  n = t prime45

bm3' tvar m = benchmark "to/from list on a type" 
              (n (bitsize (from_list (to_list (from_list ps)))))
  where ps = map tvar [101,2002..3000+m]

bm3'' tvar m = benchmark "to/from list on a type" 
              (n (bitsize (from_list (to_list (from_list ps)))))
  where ps = map (dual.tvar) [101,2002..3000+m]

bm7t = benchmark "primes on t" 
  (last (take 100 ps)) where ps = primes :: [T]
bm7b = benchmark "primes on b" 
  (last (take 100 ps)) where ps = primes :: [B]
bm7n = benchmark "primes on n" 
  (last (take 100 ps)) where ps = primes :: [Integer]

bm8t = benchmark "mersenne on t" 
  (last (take 7 ps)) where ps = mersenne_primes :: [T]
bm8b = benchmark "mersenne on b" 
  (last (take 7 ps)) where ps = mersenne_primes :: [B]
bm8n = benchmark "mersenne on n" 
  (last (take 7 ps)) where ps = mersenne_primes :: [Integer]
\end{code}

The following tests the syracuse / Collatz conjecture up to {\tt m}
\begin{code}
test_syr tvar m = maximum (map length (map (nsyr . tvar) [0..m]))

compress_syr tvar m = r where
  nss = map (nsyr . tvar) [0..m]
  r = maximum (map (n.bitsize) (map from_list nss))

-- overflows for m>2 except for tvar=t
compress_syr_twice tvar m = r  where
  nss = map (nsyr . tvar) [0..m]
  r = (n.bitsize) (from_list (map from_list nss))
  
bm9 tvar = benchmark "test syracuse" (test_syr tvar 2000)

bm10 tvar = benchmark "compress syracuse" (compress_syr tvar 100)

bm11 tvar = benchmark "compress syracuse_twice" r where
            r = compress_syr_twice tvar 20
\end{code}

The following function computes the size of a tree-represented
natural number:
\begin{code}
tsize T = 1
tsize (V x xs) = 1+ sum (map tsize (x:xs))
tsize (W x xs) = 1+ sum (map tsize (x:xs))
\end{code}

The function {\tt kth} computes the k-th iteration of a function
application.
\begin{code}
kth _ k x | e_ k  = x
kth f k x  = f (kth f (s' k) x)
\end{code}

The following assertions are used for testing some of our operations:

\begin{code}
-- relation between iterations of o,i and power of 2 
a1 k = pow (i e) k == s (kth o k e) 
a2 k = pow (i e) k == s (s (kth i (s'  k) e))

-- relations between power operations and multiplication
a3 n b = (u==v,u,v) where
 m = pow (i e) n
 u = kth o n b
 v = s' (mul m (s b))
 
a4 x y = (a==b,a,b) where 
  a = mul (pow (i e) x) y
  b = s (kth o x (s' y))
\end{code}

\end{document}